\newtheorem{proposition}{Proposition}
\newtheorem{thm}{Theorem}
\newtheorem*{thm*}{Theorem}
\newtheorem*{remark*}{Remark}
\crefname{thm}{Theorem}{Theorems}
\newcounter{opteq}
\newenvironment{opteq}{\refstepcounter{opteq}\align}{\tag{P\theopteq}\endalign}
\DeclareMathOperator*{\argmax}{arg\,max}
\DeclareMathOperator{\T}{\top}
\DeclareMathOperator{\HT}{H}
\DeclareMathOperator{\rank}{rank}
\newcounter{savesection}
\newcounter{apdxsection}
\renewcommand\appendix{\par
	\setcounter{savesection}{\value{section}}%
	\setcounter{section}{\value{apdxsection}}%
	\setcounter{subsection}{0}%
	\gdef\thesection{\@Alph\c@section}}
\newcommand\unappendix{\par
	\setcounter{apdxsection}{\value{section}}%
	\setcounter{section}{\value{savesection}}%
	\setcounter{subsection}{0}%
	\gdef\thesection{\@arabic\c@section}}
\pgfplotsset{compat=1.16}
\pgfplotsset{table/search path={Data},
        colormap={parula}{
            rgb255=(53,42,135)
            rgb255=(15,92,221)
            rgb255=(18,125,216)
            rgb255=(7,156,207)
            rgb255=(21,177,180)
            rgb255=(89,189,140)
            rgb255=(165,190,107)
            rgb255=(225,185,82)
            rgb255=(252,206,46)
            rgb255=(249,251,14)
        },
        every axis/.append style={
                    label style={font=\small},
                    tick label style={font=\small},  
                    title style={font=\small}, 
                    },
    contour/label node code/.code={%
        \node{$m=\pgfmathprintnumber{#1}$};},
        log x ticks with fixed point/.style={
      xticklabel={
        \pgfkeys{/pgf/fpu=true}
        \pgfmathparse{exp(\tick)}%
        \pgfmathprintnumber[fixed relative, precision=2]{\pgfmathresult}
        \pgfkeys{/pgf/fpu=false}
      }
  }
    }
\newcommand\blfootnote[1]{%
  \begingroup
  \renewcommand\thefootnote{}\footnote{#1}%
  \addtocounter{footnote}{-1}%
  \endgroup
}
\newcommand{\rd}[1]{\textcolor[rgb]{0,0,0}{#1}}
\newcommand{\bl}[1]{\textcolor[rgb]{0,0,0}{#1}}
\newcommand{\ff}[1]{\textcolor[rgb]{0,0,0}{#1}}
\title{Effect of Beampattern on Matrix Completion with Sparse Arrays}
\name{\rd{Robin Rajam\"{a}ki, \ff{Mehmet Can H\"uc\"umeno\u{g}lu}, Pulak Sarangi, and \ff{Piya Pal}}}
\address{Department of Electrical and Computer Engineering, University of California San Diego}
\begin{document}

\maketitle

\begin{abstract}
We study the problem of noisy sparse array interpolation, where a large virtual array is synthetically generated by interpolating missing sensors using matrix completion techniques that promote low rank. The current understanding is quite limited regarding the effect of the (sparse) array geometry on the angle estimation error (post interpolation) of these methods. In this paper, we make advances towards solidifying this understanding by revealing the role of the {\em physical beampattern} of the sparse array on the performance of low rank matrix completion techniques. When the beampattern is analytically tractable (such as for uniform linear arrays and nested arrays), our analysis provides concrete and interpretable bounds on the scaling of the angular error as a function of the number of sensors, and demonstrates the effectiveness of nested arrays in presence of noise and a single temporal snapshot.

\blfootnote{This work was supported in part by grants ONR N00014-19-1-2256, NSF 2124929, and DE-SC0022165.} 
\end{abstract}
\begin{keywords}
Sparse arrays, Matrix completion, Interpolation, Toeplitz, Positivite semidefinite.
\end{keywords} 

\section{Introduction}



Sparse sensor arrays offer several advantages over conventional uniform arrays, such as enhanced angular resolution \cite{sarangi2023superresolution} and the ability to identify more sources than sensors \cite{wang2017coarrays}. Consequently, they hold great promise in a plethora of emerging applications, including mmWave channel estimation \cite{haghighatshoar2018low}, automotive radar \cite{patole2017automotive}, and integrated sensing and communications (ISAC) \cite{liu2023integrated}. A key challenge is 
to fully leverage the benefits \rd{of sparse arrays} in sample-starved \rd{(even single-snapshot)} \cite{sarangi2021beyond,sarangi2022single,hucumenoglu2023toregularize,sarangi2023superresolution,sun2020asparse,liu2021rank,amini2022,ma2021multi} scenarios. \emph{Array interpolation} provides a potential remedy to these challenges \cite{sarangi2022single}. In particular, matrix completion based approaches have gained significant attention \cite{sun2020asparse,amini2022,sarangi2022single} \rd{due to their ability to} harness the {\em large filled aperture} of sparse arrays for  
high-resolution beamforming and direction-of-arrival estimation. 
Despite the surge of research on sparse array matrix completion, several theoretical questions remain open. Most notably, the impact of the physical array geometry on source localization error (post interpolation) is poorly understood. 
We address this outstanding issue by establishing that the  \emph{beampattern of sparse arrays} plays a major role in matrix completion based array interpolation. We rigorously \rd{show} that the {\em physical array beampattern} controls the worst-case error of these interpolation techniques, even when the goal is \emph{not to perform beamforming}, but to estimate target parameters using other methods. Since beamforming is a linear operation, 
it is not \emph{a priori} obvious that \rd{the beampattern} should also directly influence the error of low-rank matrix completion---a  \emph{nonlinear} and \emph{nonconvex} optimization problem. 
\bl{Although several nonlinear beamforming schemes \cite{vaidyanathan2011sparsesamplers,adhikari2017spatial,cohen2018sparseconvolutional} have been designed for sparse arrays,  
the connection between the (irregular) beampattern of sparse arrays and noisy matrix completion has not been investigated}. Our results also offer 
an intuitive interpretation of the worst-case angular error (specifically, its scaling with the number of sensors) of any given array geometry in terms of the associated beampattern and its main lobe width (spatial resolution) / side lobe levels (noise robustness) with respect to the signal-to-noise ratio (SNR).

To make the connection concrete and for ease of exposition, we focus on a single source model with an unknown angle and (positive) amplitude. Single source models are relevant in, e.g., beam alignment for mmWave communication \cite{chiu2019active}, as well as target detection and tracking for ISAC \cite[pp.~122, 422]{liu2023integrated}. The physical beampattern of sparse arrays will continue to play an important role for multiple (possibly real or complex-valued) sources. While such extensions are possible, they are not straightforward and are part of ongoing work.
\emph{Notation:} 
Given matrix $\bm{X}\in\mathbb{C}^{N\times M}$, $\bm{X}_{\mathbb{X}}$ denotes the $|\mathbb{X}|\times M$ matrix formed by retaining the $|\mathbb{X}|\leq N$ rows indexed by set $\mathbb{X}+1\subset\mathbb{N}_+$. The Hermitian Toeplitz matrix whose first column is $\bm{t}\in\mathbb{C}^{N}$ is denoted by $\mathcal{T}(\bm{t})\in\mathbb{C}^{N\times N}$.
\section{Signal model}\label{sec:signal_model}

Consider a $P$-sensor linear array with sensor positions given by set $\mathbb{D}=\{d_1,d_2,\ldots,d_P\}\subset \mathbb{N}$, $d_1=0<d_2<\ldots<d_P$ in units of half the carrier wavelength. We will focus on the uniform linear array (ULA), $\mathbb{D}=\mathbb{U}_P\triangleq \{0,1,\ldots,P-1\}$, and nested array \cite{pal2010nested}, $\mathbb{D}=\mathbb{S}\triangleq \mathbb{U}_{M}\cup((M+1)\mathbb{U}_{M}+M)$, where $M\triangleq\frac{P}{2}$ and $P$ is assumed even. 
Suppose we observe a single snapshot of a narrowband far field source signal with unknown angular direction $\theta\in[-\frac{\pi}{2},\frac{\pi}{2})$ and (positive) amplitude $\alpha>0$ impinging on the array in the presence of noise:
\begin{align}
    \bm{y} = \alpha \bm{a}_{\mathbb{D}}(\theta)+\bm{n}.\label{eq:y}
\end{align}
Here, $\bm{a}_{\mathbb{D}}(\theta)\in\mathbb{C}^P$ is the  steering vector satisfying $[\bm{a}_{\mathbb{D}}(\theta)]_i=e^{j\pi d_i\sin\theta}, d_i\in\mathbb{D}$ and noise is assumed \bl{to be} bounded as $\|\bm{n}\|_2\leq \epsilon$. 

 The objective of array interpolation is to emulate a large virtual ULA $\mathbb{U}_N\supseteq \mathbb{D}$ with $N\geq P$ sensors, by ``computationally filling in" the missing sensors. 
 A popular array interpolation approach is to solve a matrix completion problem \cite{abramovic1999positive,qiao2017unified,sun2020asparse,sarangi2022single,hucumenoglu2023toregularize} such as the following positive semidefinite (PSD) Toeplitz completion problem:
\begin{opteq}
\underset{\bm{t}\in\mathbb{C}^{N}}{\text{minimize}}\ \rank\mathcal{T}(\bm{t})\	\text{s.t.}\ \|\bm{y}-\bm{t}_{\mathbb{D}}\|_2\leq \epsilon, \mathcal{T}(\bm{t})\succeq 0. \label{p:psd}
\end{opteq}
%
In the following, we denote the steering vector of $\mathbb{U}_N$ by $\bm{a}(\theta)\in\mathbb{C}^N$, where $a_{i}(\theta)=e^{j\pi (i-1) \sin\theta}, i=1,2,\ldots,N$. Hence, $\bm{a}_{\mathbb{D}}(\theta)\in\mathbb{C}^P$ can be interpreted as consisting of a subset of the entries of $\bm{a}(\theta)$, sampled by $\mathbb{D}$.
\begin{proposition}[Rank-1 solutions]\label{thm:psd}
    Suppose $\bm{y}=\alpha\bm{a}_{\mathbb{D}}(\theta)+\bm{n}$, where $\|\bm{n}\|_2\leq \epsilon$ and $\alpha >2\epsilon /\sqrt{P}$. Then any solution $\bm{\hat{t}}\in\mathbb{C}^N$ to \eqref{p:psd} is of the form $\bm{\hat{t}}=\hat{\alpha}\bm{a}(\hat{\theta})$, $\hat{\alpha}>0$.
\end{proposition}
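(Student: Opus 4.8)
The plan is to first determine the optimal value of \eqref{p:psd} exactly---showing it equals one---and then invoke the rigid structure of rank-one PSD Toeplitz matrices to identify every minimizer. For the upper bound, I would note that since $\mathbb{D}\subseteq\mathbb{U}_N$, restricting $\bm{a}(\theta)\in\mathbb{C}^N$ to the rows indexed by $\mathbb{D}$ returns exactly $\bm{a}_{\mathbb{D}}(\theta)$; hence $\bm{t}^\star\triangleq\alpha\bm{a}(\theta)$ is feasible ($\|\bm{y}-\bm{t}^\star_{\mathbb{D}}\|_2=\|\bm{n}\|_2\le\epsilon$ and $\mathcal{T}(\bm{t}^\star)=\alpha\,\bm{a}(\theta)\bm{a}(\theta)^{\HT}\succeq0$), and it has rank one, so the optimal value is at most one (and is attained, since $\rank$ is integer-valued and the feasible set is nonempty). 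For the matching lower bound, I would use that the only rank-zero PSD Toeplitz matrix is $\bm{0}$, i.e.\ $\bm{t}=\bm{0}$, which is feasible only if $\|\bm{y}\|_2\le\epsilon$; but $\|\bm{a}_{\mathbb{D}}(\theta)\|_2=\sqrt{P}$, so the reverse triangle inequality gives $\|\bm{y}\|_2\ge\alpha\sqrt{P}-\epsilon>\epsilon$, the strict inequality being precisely the hypothesis $\alpha>2\epsilon/\sqrt{P}$. Thus the optimal value is exactly one, and any solution $\hat{\bm{t}}$ satisfies $\rank\mathcal{T}(\hat{\bm{t}})=1$ and $\mathcal{T}(\hat{\bm{t}})\succeq0$.

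It then remains to show that a rank-one PSD Toeplitz matrix is necessarily a scaled steering-vector outer product. I would write $\mathcal{T}(\hat{\bm{t}})=\bm{v}\bm{v}^{\HT}$ with $\bm{v}\neq\bm{0}$; the Toeplitz property forces the diagonal entries $|v_i|^2$ to be a common value $\hat\alpha>0$ and the first-subdiagonal entries $v_{i+1}\overline{v_i}$ to be a common value $\hat\alpha e^{j\omega}$, which together yield the geometric progression $v_i=v_1 e^{j(i-1)\omega}$. Hence $\bm{v}=c\,\bm{a}(\hat\theta)$ for some unimodular $c$ and some $\hat\theta\in[-\tfrac{\pi}{2},\tfrac{\pi}{2})$ chosen so that $\pi\sin\hat\theta\equiv\omega\pmod{2\pi}$ (possible because $\sin$ maps $[-\tfrac{\pi}{2},\tfrac{\pi}{2})$ onto $[-1,1)$). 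Then $\mathcal{T}(\hat{\bm{t}})=\hat\alpha\,\bm{a}(\hat\theta)\bm{a}(\hat\theta)^{\HT}=\mathcal{T}(\hat\alpha\bm{a}(\hat\theta))$, and since $\bm{t}\mapsto\mathcal{T}(\bm{t})$ is injective (the first column determines the matrix), $\hat{\bm{t}}=\hat\alpha\bm{a}(\hat\theta)$ with $\hat\alpha>0$, as claimed.

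The whole argument is short, so there is no single hard obstacle; the step deserving the most care is the structural one, where one must verify that the linear-phase progression extracted from the Toeplitz constraint really is realizable as a steering vector $\bm{a}(\hat\theta)$ of $\mathbb{U}_N$ for some admissible angle (a periodicity/modular check), and that $\hat\alpha>0$ genuinely follows from $\bm{v}\neq\bm{0}$. I would also highlight that the SNR-type condition $\alpha>2\epsilon/\sqrt{P}$ enters only to rule out the trivial all-zero completion, and that nothing about the specific sparse geometry is used beyond $\mathbb{D}\subseteq\mathbb{U}_N$ and $|\mathbb{D}|=P$---the role of the array beampattern surfaces only in the subsequent error analysis, not in this rank-one characterization.
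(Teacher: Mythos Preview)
Your proof is correct and follows essentially the same three-step structure as the paper's: exhibit $\alpha\bm{a}(\theta)$ as a feasible rank-one point, rule out the rank-zero (all-zero) completion via the hypothesis $\alpha>2\epsilon/\sqrt{P}$, and then identify the form of any rank-one PSD Toeplitz matrix. The only difference is in the last step, where the paper simply invokes Carath\'eodory's theorem (Vandermonde decomposition of PSD Toeplitz matrices) as a black box, whereas you give the direct elementary argument---reading off the geometric progression from the constant-diagonal and constant-subdiagonal constraints---which makes your version self-contained at the cost of a few more lines.
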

\begin{proof}
     Let $\bm{\hat{t}}$ be a minimizer of \eqref{p:psd}. Note that $\bm{t}'=\alpha\bm{a}(\theta)$ is a feasible point, since $\mathcal{T}(\bm{t}')=\alpha\bm{a}(\theta)\bm{a}^{\HT}(\theta)$ is PSD and $\|\bm{y}-\alpha\bm{a}_{\mathbb{D}}(\theta)\|_2=\|\bm{n}\|_2\leq \epsilon$. Hence, there exists a feasible point satisfying $\rank(\mathcal{T}(\bm{t}'))=1$. This is also the minimum rank solution, since a zero-rank solution is infeasible due to the assumption  
$\alpha>2\epsilon/\sqrt{P}$. 
      Finally, as $\mathcal{T}(\bm{\hat{t}})$ is a rank-1 PSD Toeplitz matrix, it admits Vandermonde decomposition $\mathcal{T}(\bm{\hat{t}}) = \hat{\alpha} \bm{a}(\hat{\theta}) \bm{a}^{\HT}(\hat{\theta})$ by Caratheodory's theorem \cite{cybenko1982moment}. Hence, $\bm{\hat{t}} = \hat{\alpha}\bm{a}(\hat{\theta}), \hat{\alpha}\!>\!0$.
\end{proof}
\cref{thm:psd} establishes that for large enough $\alpha$, any solution to \eqref{p:psd} will have the parametric form $\bm{\hat{t}}=\hat{\alpha}\bm{a}(\hat{\theta})$ which represents a \rd{scaled} virtual steering vector corresponding to angle $\hat{\theta}$. Consequently, the remainder of this paper focuses on the angle estimation error of the \emph{class of estimators that use $\hat{\theta}$ as an estimate of $\theta$}. Prominent members of this class include beamforming and subspace methods \cite{liao2016music,li2020super} that operate on $\bm{\hat{t}}$.\footnote{Beamforming seeks $\argmax_{\vartheta} |\bm{a}^{\HT}(\vartheta)\bm{\hat{t}}|\!=\!\argmax_{\vartheta} \hat{\alpha}|\bm{a}^{\HT}(\vartheta)\bm{a}(\hat{\theta})|\\ =\!\hat{\theta}$, 
whereas subspace methods find $\hat{\theta}$ by identifying the (rank-1) subspace spanned by $\bm{a}(\hat{\theta})$ from $\mathcal{T}(\bm{\hat{t}})=\hat{\alpha} \bm{a}(\hat{\theta})\bm{a}^{\HT}(\hat{\theta})$.} 
The following question underlies our main contribution: ``{\em \bl{Is} it possible to obtain a universal upper bound on the angle estimation error for this class?" }  
If so, what is the quantity of interest (specific to an array geometry) that determines such a bound?
Interestingly, an answer is given by the so-called \emph{unweighted physical array beampattern}. 

\section{Beampattern and angular error of matrix completion}\label{sec:beampattern}

The \emph{unweighted beamformer} 
of array $\mathbb{D}$ for spatial frequency $\omega\in\mathbb{R}$ is given by 
\begin{align}
    H_{\mathbb{D}}(\omega) \triangleq \sum_{d\in\mathbb{D}} e^{j \pi d\omega}. \label{eq:beampattern}
\end{align}
%
%
For any nonzero solution of \eqref{p:psd}, $\bm{\hat{t}}=\hat{\alpha}\bm{a}(\hat{\theta})\in\mathbb{C}^N$, we have 
\begin{align*}
    \|\bm{y}-\bm{\hat{t}}_{\mathbb{D}}\|_2=\|\alpha\bm{a}_{\mathbb{D}}(\theta)-\hat{\alpha}\bm{a}_{\mathbb{D}}(\hat{\theta})+\bm{n}\|_2=\|\bm{A}_{\mathbb{D}}\bm{\alpha}+\bm{n}\|_2,
\end{align*}
where $\bm{A}_{\mathbb{D}}\triangleq [\bm{a}_{\mathbb{D}}(\theta), \bm{a}_{\mathbb{D}}(\hat{\theta})]$ and $\bm{\alpha}\triangleq [\alpha,-\hat{\alpha}]^{\T}$. Hence,
\begin{align}
    \epsilon\!\geq\!\|\bm{y}-\bm{\hat{t}}_{\mathbb{D}}\|_2\!\geq\!\|\bm{A}_{\mathbb{D}}\bm{\alpha}\|_2\!-\!\|\bm{n}\|_2
    \geq \sigma_2(\bm{A}_{\mathbb{D}})\|\bm{\alpha}\|_2-\epsilon\ff{.}\label{eq:rev_tri}
\end{align}
Here $\sigma_2(\bm{A}_{\mathbb{D}})$ denotes the second largest singular value of $\bm{A}_{\mathbb{D}}$.
The last inequality holds 
whenever $\bm{A}_{\mathbb{D}}$ has full column rank, which can be verified to be true for ULAs and nested arrays \ff{when} $M\geq 2$. 
Rearranging \eqref{eq:rev_tri} yields
\begin{align}
    \sigma_2(\bm{A}_{\mathbb{D}})\leq \frac{2\epsilon}{\sqrt{\alpha^2+\hat{\alpha}^2}}\leq 2\frac{\epsilon}{\alpha}=2\rho^{-1/2},\label{eq:sigma_2_general}
\end{align}
where $\rho \triangleq (\alpha/\epsilon)^2$ is defined as the \ff{SNR}. Now, let $\bar{\omega} \triangleq \sin \hat{\theta}-\sin\theta$, and note that 
\begin{align*}
    \sigma_2^2(\bm{A}_{\mathbb{D}})
    =\sigma_2(\bm{A}_{\mathbb{D}}^{\HT}\bm{A}_{\mathbb{D}})
    =\sigma_2\Bigg(
    \begin{bmatrix}
        P&H_{\mathbb{D}}(\bar{\omega})\\
        H_{\mathbb{D}}^\ast(\bar{\omega})&P
    \end{bmatrix}
    \Bigg),
\end{align*}
since 
$\bm{a}_{\mathbb{D}}^{\HT}(\theta)\bm{a}_{\mathbb{D}}(\hat{\theta})=H_{\mathbb{D}}(\bar{\omega})$. \ff{The characteristic polynomial of $\bm{A}_{\mathbb{D}}^{\HT}\bm{A}_{\mathbb{D}}$ has two (non-negative) roots, the smaller one being}
\begin{align}
    \sigma_2^2(\bm{A}_{\mathbb{D}})=P-|H_{\mathbb{D}}(\bar{\omega})|. \label{eq:sigma_min}
\end{align}
This establishes the correspondence between the \emph{beampattern} $|H_{\mathbb{D}}(\bar{\omega})|$ and the second largest singular value of $\bm{A}_{\mathbb{D}}$. For convenience, instead of $\bar{\omega}\in(-2,2)$, we henceforth consider the \emph{wrap-around distance} $\Delta\in[0,1]$, defined as 
\begin{align}
    \Delta \triangleq 
    \min_{k\in\mathbb{Z}}|\sin\hat{\theta}-\sin\theta+2k|.\label{eq:Delta}
\end{align}
It can be verified that $|H_{\mathbb{D}}(\Delta)|=|H_{\mathbb{D}}(\bar{\omega})|$.  
\ff{Thus, by \labelcref{eq:sigma_2_general,eq:sigma_min,eq:Delta}}
\begin{align}
    \frac{1}{P}|H_{\mathbb{D}}(\Delta)|\geq 1 - \frac{4}{P\rho}. \label{eq:fcn_Delta_bound}
\end{align}
\cref{eq:fcn_Delta_bound} yields a useful \emph{necessary} condition that the angular error $\Delta$ of any solution to \eqref{p:psd} must satisfy. A key insight offered by \eqref{eq:fcn_Delta_bound} is that a low $\Delta$ is guaranteed if the SNR $\rho$ is within the {\em main lobe} of 
the beampattern. This implies that properly designed sparse arrays can achieve much lower angular error than the ULA, provided the side lobes of the sparse array are not too high. 
\cref{fig:beampattern} illustrates these observations in case of three arrays with $P=10$ sensors: the ULA $\mathbb{D}=\mathbb{U}_{10}$, dilated ULA $\mathbb{D}=3\mathbb{U}_{10}$, and nested array $\mathbb{D}=\mathbb{U}_5\cup (6\mathbb{U}_5+5)$. 
Note that \eqref{eq:fcn_Delta_bound} is independent of $N$. This is consistent with the notion that extrapolation cannot fundamentally decrease angular error. \ff{Finally, extensions to $K>1$ sources require ensuring that the solution to \eqref{p:psd} is of $\rank$ $K$, and characterizing the smallest singular value of a $2K\times 2K$ matrix.} 

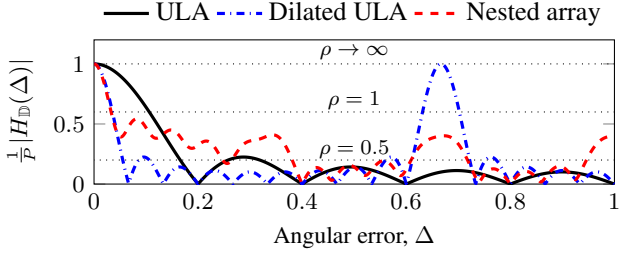
\begin{figure}
\newcommand{\numsens}{10}
\newcommand{\snra}{1}
\newcommand{\snrb}{0.5}
\pgfkeys{/pgf/number format/.cd,fixed,precision=1}
    \centering
    \begin{tikzpicture}
         \begin{axis}[width=8.5 cm,height=3.5 cm,ymin=0,ymax=1.2,xmin=0,xmax=1,ylabel={$\frac{1}{P}|H_{\mathbb{D}}(\Delta)|$},xlabel={Angular error, $\Delta$},title style={yshift=0 pt},xticklabel shift = 0 pt,xlabel shift = {0 pt},xtick style={draw=none},yticklabel shift=0pt,ylabel shift = 0 pt,ymode=linear,
         legend style = {at={(0.5,1.03)},anchor=south,draw=none,fill=none},legend columns=3]
                 \addplot[black,very thick] table[x=Delta,y=B_ula]{Data/beampatterns.dat};
                 \addlegendentry{ULA}
                 
                 \addplot[blue,dashdotted,very thick] table[x=Delta,y=B_sula]{Data/beampatterns.dat};
                 \addlegendentry{Dilated ULA}

                \addplot[red,dashed,very thick] table[x=Delta,y=B_nst]{Data/beampatterns.dat};
                 \addlegendentry{Nested array}
                 
                 \addplot[dotted] {1};
                \node[shift={(0cm,0.15cm)}] at (axis cs:(0.5,1) {\footnotesize{$\rho\to\infty$}};

                 \addplot[dotted] {1-4/\numsens*\snra^(-1)};
                \node[shift={(0cm,0.15cm)}] at (axis cs:(0.5,{1-4/\numsens*\snra^(-1)})
                {\footnotesize{$\rho=\snra$}};

                \addplot[dotted] {1-4/\numsens*\snrb^(-1)};
                \node[shift={(0cm,0.15cm)}] at (axis cs:(0.5,{1-4/\numsens*\snrb^(-1)}) {\footnotesize{$\rho=\snrb$}};
                 \end{axis}
    \end{tikzpicture}
    \caption{Beampattern $|H_{\mathbb{D}}(\Delta)|$ and SNR $\rho$ control angular error $\Delta$ of matrix completion. Horizontal dotted lines indicate the right hand side of \eqref{eq:fcn_Delta_bound} for different values of $\rho$. When $\rho$ is sufficiently large, $\Delta$ is determined by the { intersection of the corresponding horizontal line with the main lobe}, which is narrower for sparse arrays. However, low side lobes are also necessary to guarantee low $\Delta$. }\label{fig:beampattern}
\end{figure}

\section{Analytical upper bounds on angular error}\label{sec:upper_bounds}

\cref{fig:beampattern} showed that simply plotting the (irregular) beampattern of an arbitrary sparse array is a practical means of deducing an upper bound on the angular error $\Delta$. Since $|H_{\mathbb{D}}(\Delta)|$ can be a  complicated function of $\Delta$, analytically inverting it is challenging in general. However, for certain structured arrays, $|H_{\mathbb{D}}(\Delta)|$ becomes analytically tractable, which helps in characterizing \rd{the scaling  of $\Delta$} with parameters of interest such as $P$. Let $\Delta_\mathbb{D}$ denote the angle estimation error resulting from solving \eqref{p:psd} \rd{and} using any estimator from the class discussed \rd{in \cref{sec:signal_model}}. The following theorem provides upper bounds on $
$$\Delta_{\mathbb{D}}$ which reveal how fast the error decays with $P$.
\begin{thm}\label{thm:upper_bound}
    \bl{Let $P=2M$, where $M\geq 2$. If $\rho \geq 25/P$, then $\Delta_{\mathbb{U}_{P}} \leq \rd{1.2}/P$ and $\Delta_{\mathbb{S}} \leq 8/P^2$.}   
\end{thm}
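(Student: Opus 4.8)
The plan is to ``invert'' the necessary condition \eqref{eq:fcn_Delta_bound}. Since $\rho\ge 25/P$, its right-hand side satisfies $1-\tfrac{4}{P\rho}\ge 1-\tfrac{4}{25}=\tfrac{21}{25}$, so by \eqref{eq:fcn_Delta_bound} the error $\Delta_{\mathbb{D}}$ of any solution to \eqref{p:psd} must satisfy $\tfrac1P|H_{\mathbb{D}}(\Delta_{\mathbb{D}})|\ge\tfrac{21}{25}$. It is therefore enough to show that $\tfrac1P|H_{\mathbb{U}_P}(\Delta)|<\tfrac{21}{25}$ whenever $\Delta>1.2/P$, and that $\tfrac1P|H_{\mathbb{S}}(\Delta)|<\tfrac{21}{25}$ whenever $\Delta>8/P^2$; any such $\Delta$ then violates \eqref{eq:fcn_Delta_bound}. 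The workhorse estimate is the Dirichlet-kernel identity $|H_{\mathbb{U}_L}(\omega)|=|\sin(\pi L\omega/2)|/|\sin(\pi\omega/2)|$ together with the concavity bound $\sin(\pi\delta/2)\ge\delta$ on $[0,1]$, giving $|H_{\mathbb{U}_L}(\omega)|\le\min\{L,\,1/\delta_\omega\}$, where $\delta_\omega:=\min_{k\in\mathbb{Z}}|\omega-2k|\in[0,1]$ is the wrap-around distance of $\omega$ to $2\mathbb{Z}$.

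For the ULA this is a one-line computation: for $\Delta\in(0,1]$ we have $\delta_\Delta=\Delta$, so $\tfrac1P|H_{\mathbb{U}_P}(\Delta)|\le\tfrac1{P\Delta}$, and $\tfrac1{P\Delta}\ge\tfrac{21}{25}$ already forces $\Delta\le\tfrac{25}{21P}<1.2/P$ (in fact this yields the slightly sharper $\Delta_{\mathbb{U}_P}\le 25/(21P)$).

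For the nested array I would start from the factorization $H_{\mathbb{S}}(\omega)=H_{\mathbb{U}_M}(\omega)+e^{j\pi M\omega}H_{\mathbb{U}_M}((M+1)\omega)$, obtained by splitting the union $\mathbb{S}=\mathbb{U}_M\cup((M+1)\mathbb{U}_M+M)$, so that $|H_{\mathbb{S}}(\Delta)|\le|H_{\mathbb{U}_M}(\Delta)|+|H_{\mathbb{U}_M}((M+1)\Delta)|$. The governing idea is that $H_{\mathbb{U}_M}((M+1)\cdot)$ is a fast oscillation whose first null sits at argument $2/M$, i.e.\ at $\Delta=2/(M(M+1))\approx 2/M^2$, so once $\Delta$ passes this point the two summands cannot both be near their common maximum $M$. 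Concretely I would split $\Delta\in[8/P^2,1]=[2/M^2,1]$ into: (A) $\Delta\ge 2/M$, where $|H_{\mathbb{U}_M}(\Delta)|\le1/\Delta\le M/2$ and the second summand is $\le M$, so the sum is $\le\tfrac32 M<\tfrac{42}{25}M=\tfrac{21}{25}P$; (B) $2/M^2\le\Delta<2/M$ with $(M+1)\Delta$ not within $2/M$ of an even integer, where $\delta_{(M+1)\Delta}\ge 2/M$ makes the second summand $\le M/2$ and again the sum is $\le\tfrac32 M$; and (C) the ``grating-lobe'' window $\Delta\in\big(\tfrac{2(M-1)}{M(M+1)},\tfrac2M\big)$, where $(M+1)\Delta$ lies within $2/M$ of $2$ and the second summand may be as large as $M$, but there $\pi M\Delta/2\in\big(\pi-\tfrac{2\pi}{M+1},\pi\big)$ forces $|\sin(\pi M\Delta/2)|\le\sin(\tfrac{2\pi}{M+1})$ while $\sin(\pi\Delta/2)\ge\Delta>\tfrac{2(M-1)}{M(M+1)}$, so the first summand is at most $\tfrac{\pi M}{M-1}$ and the sum is $\le M+\tfrac{\pi M}{M-1}<\tfrac{42}{25}M$ as soon as $M\ge 6$. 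The finitely many remaining cases $M\in\{2,3,4,5\}$ have small aperture and can be dispatched by checking $|H_{\mathbb{S}}(\Delta)|<\tfrac{42}{25}M$ on $[2/M^2,1]$ directly.

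I expect regime (C) to be the crux. The naive triangle-inequality bound $|H_{\mathbb{S}}|\le 2M$ is useless there, because the stretched kernel $H_{\mathbb{U}_M}((M+1)\cdot)$ genuinely revives near its peak $M$ at every $\Delta\approx 2k/(M+1)$; the rescue is the slightly delicate observation that such $\Delta$ also lie near a null of the \emph{unstretched} kernel $H_{\mathbb{U}_M}(\cdot)$, so one factor is automatically small exactly when the other is large. The rest is careful bookkeeping of constants, ensuring that $25/P$, $1.2/P$, and $8/P^2$ are mutually compatible and cover the small-$M$ corner cases --- which is why a threshold on $M$ appears and why $1.2$ (rather than $25/21$) is used.
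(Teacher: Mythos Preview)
Your ULA argument is identical to the paper's. For the nested array, your approach is correct but differs from the paper's in one key respect. You use only the single decomposition $|H_{\mathbb{S}}|\le|H_{\mathbb{U}_M}(\Delta)|+|H_{\mathbb{U}_M}((M+1)\Delta)|$ and then split $[2/M^2,1]$ into three regimes, the delicate one being the grating-lobe window (C), where you exploit that the first Dirichlet kernel is near its null at $\Delta=2/M$ precisely when the second kernel peaks near $(M+1)\Delta=2$; this yields $|H_{\mathbb{U}_M}(\Delta)|\le\pi M/(M-1)$ and closes the argument for $M\ge 6$, with $M\in\{2,3,4,5\}$ deferred to a finite check. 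The paper instead introduces a \emph{second} decomposition $\mathbb{S}=\mathbb{U}_{M+1}\cup((M+1)\mathbb{U}_{M-1}+2M+1)$, giving $|H_{\mathbb{S}}|\le|H_{\mathbb{U}_{M+1}}(\Delta)|+|H_{\mathbb{U}_{M-1}}((M+1)\Delta)|$, and uses it on the single interval $\Delta\in[1/(M+1),1]$: there $|H_{\mathbb{U}_{M+1}}(\Delta)|\le 1/\sin(\tfrac{\pi}{2(M+1)})<0.67(M+1)$ via a Taylor bound, while the second term is trivially $\le M-1$, yielding $|H_{\mathbb{S}}|<0.84P$ uniformly for all $M\ge 2$. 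The paper's trick thus avoids any case-by-case verification for small $M$ and needs only two intervals; your approach trades that cleanliness for a more transparent picture of why the grating lobe is harmless (the null/peak alignment), at the cost of a residual finite check that you have not actually carried out. If you keep your route, that check should be made rigorous (e.g., by bounding $|H_{\mathbb{S}}|$ analytically on $[2/M^2,1]$ for each $M\le 5$), since ``checking directly'' is not a proof as stated.
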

\begin{proof}
\underline{ULA}: By basic properties of geometric series and trigonometric identities, $|H_{\mathbb{U}_m}(\Delta)| =\big|\frac{\sin(\pi m\Delta/2)}{\sin(\pi\Delta/2)}\big|.$
Following  $|\sin y|\leq \vert y \vert, \forall y$ and $\sin y\geq \frac{2}{\pi}y, y\in(0, \frac{\pi}{2}]$ \cite[p.~33]{mitrinovic1970analytic},
\begin{align}
|H_{\mathbb{U}_m}(\Delta)|
 \leq \min\bigg(m,\frac{1}{\Delta}\bigg) \quad 0<\Delta\leq 1.\label{ub_ula}
 \end{align}
 Since \eqref{eq:fcn_Delta_bound} holds for all feasible solutions, it is also true for the worst-case error $\Delta_{\mathbb{U}_{P}}$. Combining this with $\rho\geq 25/P$:
\begin{align*}
    &\frac{1}{P\Delta_{\mathbb{U}_{P}}}\geq \frac{1}{P}|H_{\mathbb{U}_P}(\Delta_{\mathbb{U}_{P}})|\geq 1-\frac{4}{P\rho}\geq 1-\frac{4}{25}\rd{=\frac{21}{25}}\\
    &\implies \Delta_{\mathbb{U}_{P}}\leq \frac{\rd{25/21}}{P}\leq \frac{\rd{1.2}}{P}.
\end{align*}

\underline{Nested Array}: The nested array can be written as $\mathbb{S}=\mathbb{S}_1\cup\mathbb{S}_2$ where $\mathbb{S}_1\triangleq \mathbb{U}_M$ and $\mathbb{S}_2\triangleq (M+1)\mathbb{U}_{M}+M$ are disjoint sets. Hence, by the triangle inequality,
\begin{align}
    |H_{\mathbb{S}}(\Delta)|
    &= \Bigg|\sum_{d_1\in \mathbb{S}_1}e^{j\pi d_1 \Delta}+\sum_{d_2\in \mathbb{S}_2}e^{j\pi d_2 \Delta}\Bigg|\nonumber\\
    &= \Bigg|\sum_{i= 1}^{M}e^{j\pi (i-1) \Delta}+e^{j\pi M \Delta}\sum_{i= 1}^{M}e^{j\pi(i-1)(M+1)\Delta}\Bigg|\nonumber\\
    &\leq|H_{\mathbb{U}_M}(\Delta)|+|H_{\mathbb{U}_M}((M+1)\Delta)|.\label{eq:nested_ub_1}
\end{align}
Alternatively, $\mathbb{S}\!=\!\mathbb{S}_1'\!\cup\!\mathbb{S}_2'$ where $\mathbb{S}_1'\!\triangleq\!\mathbb{U}_{M+1}$, $\mathbb{S}_2'\!\triangleq\!(M+1)\mathbb{U}_{M-1}+2M+1$, and $\mathbb{S}_1'\cap\mathbb{S}_2'=\emptyset$. Hence, similarly to \eqref{eq:nested_ub_1},
\begin{align}
     |H_{\mathbb{S}}(\Delta)| \leq |H_{\mathbb{U}_{M+1}}(\Delta)|+|H_{\mathbb{U}_{M-1}}((M+1)\Delta)|.\label{eq:nested_ub_2}
\end{align}
We proceed by showing that the worst-case angle error obeys $|H_{\mathbb{S}}(\Delta_{\mathbb{S}})|<0.84P$ when $\Delta_{\mathbb{S}}\in[\frac{2}{(M+1)M},1]$, which can be interpreted as upper bounding the highest side lobe level. We consider two subintervals:
\begin{enumerate}[label=(\roman*)]    
    \item\label{i:start} $\Delta_{\mathbb{S}}\in[\frac{2}{(M+1)M},\frac{1}{M+1})$: Denote $\Delta' = \Delta_{\mathbb{S}}(M+1)$. Thus, $\Delta'\in[\frac{2}{M},1)$, which by \eqref{ub_ula} implies that
    \begin{align*}
        |H_{\mathbb{U}_M}((M+1)\Delta_{\mathbb{S}})| =|H_{\mathbb{U}_M}(\Delta')|\leq \frac{1}{\Delta'}\leq \frac{M}{2}.
        \end{align*}
    Hence, $|H_\mathbb{S}(\Delta_{\mathbb{S}})| \leq M+\frac{M}{2}=\frac{3P}{4}$ by \labelcref{ub_ula,eq:nested_ub_1}.
    \item\label{i:mid} $\Delta_{\mathbb{S}}\!\in\![\frac{1}{M+1},1]$: Note that $\sin(\pi\Delta_{\mathbb{S}}/2)$ is a positive increasing function of $\Delta_{\mathbb{S}}$ when $\Delta_{\mathbb{S}}\!\in\![\frac{1}{M+1},1]$. Thus, $\frac{1}{\sin(\pi\Delta_{\mathbb{S}}/2)}\leq \frac{1}{\sin(\pi/(2(M+1))}$. Moreover, $\sin x\geq (1-\frac{x^2}{6})x, x\in[0,\frac{\pi}{2}]$ can be established using the Taylor series expansion \rd{of $\sin x$} \cite[Eq.~(3.1)]{klen2010jordan}.
    Applying these two facts to $|H_{\mathbb{U}_{M+1}}(\Delta_{\mathbb{S}})|$ yields
\begin{align}
     |H_{\mathbb{U}_{M+1}}(\Delta_{\mathbb{S}})|
     \!\leq\!\frac{1}{\sin(\frac{\pi/2}{M+1})}
     \!\leq\!\frac{(M+1)12/\pi}{6-(\frac{\pi/2}{M+1})^2}.\label{eq:ii}
\end{align}
Recalling assumption $M\geq 2$, we substitute $M\!=\!2$ in the denominator of \eqref{eq:ii}---an increasing function of $M$---to obtain $|H_{\mathbb{U}_{M+1}}(\Delta_{\mathbb{S}})|<0.67 (M+1)$. Thus by \eqref{eq:nested_ub_2} and \eqref{ub_ula}: $|H_{\mathbb{S}}(\Delta_{\mathbb{S}})|< 0.67(M+1)+M-1<1.67M< 0.84P$.
\end{enumerate}
In summary, if $\Delta_{\mathbb{S}}\!\geq\!\frac{2}{(M+1)M} $ then $\frac{1}{P}|H_{\mathbb{S}}(\Delta_{\mathbb{S}})|\!<\!\rd{\max(\frac{3}{4},0.84)}\\ =\!0.84$. \rd{Also note that $\frac{1}{P}|H_{\mathbb{S}}(\Delta_{\mathbb{S}})|\!=1\!>\!0.84$ when $\Delta_{\mathbb{S}}\!=\!0$. This shows that if $\frac{1}{P}|H_{\mathbb{S}}(\Delta_0)|\!>\!0.84$ for some $\Delta_0$, then it must hold that $\Delta_0\!\in\![0, \frac{2}{(M+1)M})]$.}
\cref{thm:upper_bound} now follows via contradiction: Let $ \rho \geq 25/P$. Suppose $\Delta_{\mathbb{S}} \geq \frac{2}{(M+1)M}$. However, by \eqref{eq:fcn_Delta_bound}: $0.84>\frac{1}{P}|H_{\mathbb{S}}(\Delta_{\mathbb{S}})|\geq 1-\frac{4}{P\rho}\geq 1-\frac{4}{25} = 0.84$, which is a contradiction. Hence, if $\rho \geq 25/P$ then $\Delta_{\mathbb{S}}<\frac{2}{(M+1)M}\leq \frac{2}{M^2}=\frac{8}{P^2}$. This completes the proof.
\end{proof}
\bl{\cref{thm:upper_bound} reveals an interesting fact: when the SNR is at least proportional to $\frac{1}{P}$, the \rd{upper bound on the} worst-case angle estimation error decays \rd{with $P$ at different rates for the ULA and the nested array}. The more favorable decay of the bound in case of the nested array---$\frac{1}{P^2}$ compared to $\frac{1}{P}$ for the ULA---can also be attributed to the narrower main lobe of its beampattern, as discussed in \cref{sec:beampattern}.} 
\section{Numerical experiments}

This section numerically validates the theory outlined in \cref{sec:beampattern,sec:upper_bounds}. We solve a well-known convex relaxation of \eqref{p:psd}, where instead of the rank, we minimize the trace\footnote{Since $\mathcal{T}(\bm{t})$ is PSD, minimizing the trace is equivalent to minimizing the nuclear norm (sum of singular values) or simply the first entry of vector $\bm{t}$.} of $\mathcal{T}(\bm{t})$ using the CVX toolbox \cite{grant2014cvx}. 
We obtain an estimate $\hat{\theta}$ of angle $\theta$ by applying root-MUSIC \cite{barabell1983improving} on $\mathcal{T}(\bm{\hat{t}})$. We repeat this experiment for $1000$ Monte Carlo trials, where both the ground truth $\sin\theta$, and the real and imaginary parts of the entries of noise vector $\bm{n}$ are drawn independently at random from a uniform distribution, such that $\sin\theta,\frac{\epsilon}{\sqrt{2P}}\Re\{n_i\},\frac{\epsilon}{\sqrt{2P}}\Im\{n_i\}\sim\mathcal{U}(-1,1), i = 1,2,\ldots,N$. We fix  $\alpha\!=\!1$. SNR is varied by only varying the noise level $\epsilon$.

\cref{fig:Delta_max_SNR} shows the angular error $\Delta$ as a function of SNR $\rho$ for the { extrapolated} ULA and nested array 
with $P=10$ sensors and 
$N=30$ (aperture of the nested array). \rd{An upper bound on} $\Delta$ (dashed curves) computed using \eqref{eq:fcn_Delta_bound} and the array beampatterns in \cref{fig:beampattern}, and the largest value of $\Delta$ observed over the Monte Carlo trials (solid curves) 
display similar trends and scaling with respect to the array geometries and $\rho$ 
. This demonstrates the utility of \eqref{eq:fcn_Delta_bound} for predicting the angular error of matrix completion for diverse array configurations. \cref{fig:Delta_max_SNR} shows an initial sharp transition from high to low error, which corresponds to the transition from the side lobe region to the mainlobe of the beampattern (cf.~\cref{fig:beampattern}). The { (extrapolated)} ULA seems to display an advantage in a narrow range of {low} SNR values, which can be attributed to its lower peak side lobe level. However, as SNR increases, the nested array achieves a consistently lower angular error thanks to its narrower main lobe, \emph{despite} the fact that the ULA is extrapolated to the same virtual aperture (since $N=30$). 

\begin{figure}
\newcommand{\numsens}{10}
    \centering
    \begin{tikzpicture}
         \begin{axis}[width=8 cm,height=4.0 cm,ylabel={\rd{Angular error,} $\Delta$},xlabel={SNR, $\rho$ (dB)},title style={yshift=0 pt},xticklabel shift = 0 pt,xlabel shift = {0 pt},xmin=-15,xmax=10,yticklabel shift=0pt,ylabel shift = 0 pt,ymode=log,
         legend style = {at={(0.5,1.03)},anchor=south,draw=none,fill=none},legend columns=4]
                 
        

                 \addplot[black,dashed,very thick,forget plot] table[x =SNR,y=Delta]{Data/Delta_max_SNR_ULA.dat};
                 
                 \addplot[red,dashed,very thick,forget plot] table[x =SNR,y=Delta]{Data/Delta_max_SNR_NA.dat};

                 \addplot[black, very thick] table[x=SNR,y=Delta]{Data/Delta_max_sim_SNR_ULA_unif.dat};
                 \addlegendentry{ULA (extrapolated)}
                 \addplot[red,very thick] table[x=SNR,y=Delta]{Data/Delta_max_sim_SNR_NA_unif.dat};
                 \addlegendentry{Nested array}

                 \end{axis}
    \end{tikzpicture}\vspace{-0.4cm}
    \caption{Angular error of matrix completion as a function of SNR $\rho$ (for $P=10$). The empirical error (solid) follows the trends of the upper bounds \rd{derived from \eqref{eq:fcn_Delta_bound}}  (dashed curves).}\label{fig:Delta_max_SNR}
\end{figure}
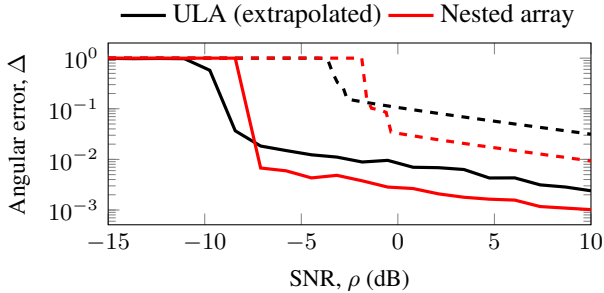

                 


\cref{fig:Delta_max_P} shows the scaling of the worst-case value of $\Delta$ as a function of the number of sensors $P$ when the SNR scales as $\rho=16/P$. 
The angular error of the nested array decays faster with $P$ compared to the { extrapolated} ULA both in case of the trends predicted by \cref{thm:upper_bound} (dashed lines), and the empirical maximum error over the Monte Carlo trials using trace minimization and root-MUSIC (solid lines). \cref{fig:Delta_max_P} also displays the approximate rates of decay estimated by a least-squares fit of linear model $\log\Delta =a\log P+b$ to the empirical curves. Over interval $P=4,6,\ldots,22$, a faster decay is observed experimentally---$P^{-1.6}$ (ULA) and $P^{-2.1}$ (nested array)---compared to \cref{thm:upper_bound} ($P^{-1}$ and $P^{-2}$). Whether this trend persists for larger values of $P$, or for other optimization algorithms and noise models is still an open question.
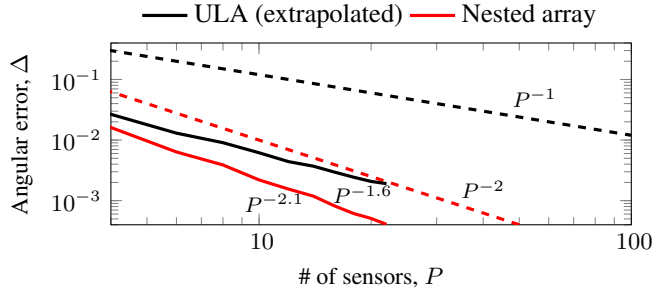
\begin{figure}
    \centering
    \begin{tikzpicture}
         \begin{axis}[width=8.5 cm,height=4.0 cm,ylabel={\rd{Angular error,} $\Delta$},xlabel={\# of sensors, $P$},title style={yshift=0 pt},xticklabel shift = 0 pt,xlabel shift = {0 pt},xmin=4,xmax=100,ymin=4e-4,ymax=4e-1,yticklabel shift=0pt,ylabel shift = 0 pt,xmode=log,log x ticks with fixed point,ymode=log,
         legend style = {at={(0.5,1.03)},anchor=south,draw=none,fill=none},legend columns=2]

                 \addplot[black,dashed,very thick,forget plot,domain = 2:100] {1.2/x};
                 

                 \addplot[red,dashed,very thick,forget plot,domain = 2:100] {1/x^2};
                 
                 
                 \addplot[black, very thick] table[x=P,y=Delta]{Data/Delta_max_sim_P_ULA_unif_lin.dat};
                 \addlegendentry{ULA (extrapolated)}

                 
                 \addplot[red, very thick] table[x=P,y=Delta]
                 {Data/Delta_max_sim_P_NA_unif_lin.dat};
                 \addlegendentry{Nested array}

               \node[shift={(0cm,0cm)}] at (axis cs:(55,5e-2) {\footnotesize{$P^{-1}$}};
                 \node[shift={(0cm,0.15cm)}] at (axis cs:(40,1e-3) {\rd{\footnotesize{$P^{-2}$}}};
                \node[shift={(0cm,0.1cm)}] at (axis cs:(19,9.5e-4) {\footnotesize{$P^{-1.6}$}};
                 \node[shift={(0cm,0.15cm)}] at (axis cs:(11,6e-4) {\rd{\footnotesize{$P^{-2.1}$}}};

                
                 \end{axis}
    \end{tikzpicture}\vspace{-.4cm}
    \caption{Angular error of matrix completion as a function of the number of sensors $P$ (for $\rho=\frac{16}{P}$). \rd{For $P\!\in[4,22]$,} the empirical curves (solid) decay slightly faster than the scaling laws $\frac{1}{P}$ and $\frac{1}{P^2}$ predicted by \cref{thm:upper_bound} (dashed lines).}\label{fig:Delta_max_P}
\end{figure}

\vspace{-0.2cm}
\section{Conclusions}
\vspace{-0.2cm}
We showed that the sparse array beampattern fundamentally controls the angular error of array interpolation based on matrix completion. Specifically, we derived upper bounds on the error of angle estimates revealed by solutions to a low rank Toeplitz completion problem, where noisy measurements of a single unknown (positive) source signal are observed. 
Using this insight, we proved that nested arrays can attain lower worst-case angle estimation error than ULAs (extrapolated to the same aperture) for comparable SNR. The theoretical findings were supported by numerical experiments that corroborate the advantages of sparse arrays compared to uniform arrays in noisy sample-starved regimes.

\bibliographystyle{IEEEtran}
\bibliography{references}
\end{document}